\newtheorem{lemma}[theorem]{Lemma}
\newtheorem{corollary}[theorem]{Corollary}
\newcommand{\blackslug}{\penalty 1000\hbox{
    \vrule height 8pt width .4pt\hskip -.4pt
    \vbox{\hrule width 8pt height .4pt\vskip -.4pt
          \vskip 8pt
      \vskip -.4pt\hrule width 8pt height .4pt}
    \hskip -3.9pt
    \vrule height 8pt width .4pt}}
\newenvironment{proof}{$\;$\newline \noindent {\sc Proof.}$\;\;\;$\rm}{\qed}
\newcommand{\qed}{\hspace*{\fill}\blackslug}
\def\boxit#1{\vbox{\hrule\hbox{\vrule\kern4pt
  \vbox{\kern1pt#1\kern1pt}
\kern2pt\vrule}\hrule}}
\newcommand{\nph}{{$\mathcal{NP}$-hard}}
\newcommand{\fpt}{{$\mathcal{FPT}$}}
\newcommand{\para}{{t}}
\newcommand{\ins}{{I}}
\begin{document}
\title{Election Attacks with Few Candidates}
\author{Yongjie Yang
\institute{Universit\"at des Saarlandes, Germany}}

\maketitle
\begin{abstract}
We investigate the parameterized complexity of strategic behaviors in generalized scoring rules. In particular, we prove that the manipulation, control (all the 22 standard types), and bribery problems are fixed-parameter tractable for most of the generalized scoring rules, with respect to the number of candidates. Our results imply that all these strategic voting problems are fixed-parameter tractable for most of the common voting rules, such as Plurality, $r$-Approval, Borda, Copeland, Maximin, Bucklin, etc., with respect to the number of candidates.
\end{abstract}

\section{Introduction}
Voting has been recognized as a common approach for preference aggregation and collective decision making whenever there exists more than one
alternative for a community to choose from. It comes with a  wide variety of applications which ranges from
multi-agent systems, political elections, recommendation systems, machine learning etc. \cite{DBLP:journals/cj/PittKSA06,DBLP:conf/hci/Popescu13b,DBLP:conf/atal/Xia13}. Unfortunately, due to the Gibbard-Satterthwaite theorem \cite{Gibbard73,Satterthwaite75}, any voting system which satisfies a set of desirable criteria is not strategy-proof, that is, there exists a voter who can make himself better off by misreporting his vote. To address this issue, Bartholdi et al. \cite{BARTHOLDI89} introduced the computational complexity to the study of strategic voting problems. The point is that if it is {\nph} to successfully perform a specific strategic behavior, the strategic individual(s) may give up performing such a strategic behavior.
Since then, exploring the complexity of strategic behaviors in voting systems has been one of the main focus of computational social choice community. We refer to \cite{DBLP:conf/birthday/BetzlerBCN12,DBLP:conf/sofsem/ChevaleyreELM07,Lindner08} for comprehensive surveys on this topic.

Recently, this purely worst-case analysis, which ignores real-world settings, was criticized by researchers. See \cite{DBLP:conf/aaai/ConitzerS06,DBLP:journals/aim/FaliszewskiP10,DBLP:conf/aldt/MatteiW13,DBLP:journals/jair/ProcacciaR07,DBLP:journals/amai/Walsh11} for detailed discussions. For their purpose, they proposed diverse measurements to evaluate the feasibility of strategic behaviors in practical elections. For example, Procaccia and Rosenschein \cite{DBLP:journals/jair/ProcacciaR07} introduced
the concept of junta distributions
(generally speaking, these are distributions over the elections that satisfy several constraints.)
and argued that
if an (heuristic) algorithm often solve the manipulation problem when the instances are
distributed according to a junta distribution, it would
also often solve the manipulation problem when the
instances are distributed according to many other plausible
distributions.

In this paper, we study the strategic behaviors in a variety of voting systems from the parameterized complexity perspective. The parameterized complexity was first systematically introduced by Downey and  Fellows \cite{fellows99}. Differently from the classical complexity, the parameterized complexity deals with problems in two dimensions. More specifically, an instance of a {\it{parameterized problem}} consists of a main part
and a parameter ${\para}$ which is normally a positive integer.
The main task in the parameterized complexity is to explore how the parameters affect the complexity of the problems. It turned out that under the framework of the parameterized complexity, many {\nph} problems become tractable with respect to specific parameterizations. More precisely, many {\nph} problems turned out to be solvable in $f(\para)\cdot |{\ins}|^{O(1)}$ time. Here, $f$ is a computable function that depends only on the parameter $\para$. All the parameterized problems which fall into this category are called fixed-parameter tractable ({\fpt} for short). However, the parameters do not always behave in this way. There are parameterized problems which do not admit {\fpt}-algorithms unless the parameterized complexity hierarchy collapses at some level, which is commonly believed to be unlikely. This discussion is beyond our focus in this paper. For a comprehensive understanding of parameterized complexity, we refer to the text of Niedermeier \cite{rolf06}.
For recent developments of parameterized complexity applied to computational social choice, we refer to \cite{DBLP:conf/birthday/BetzlerBCN12,Lindner08}.

A natural parameter in the voting scenario is the number of candidates. This parameter is relatively small in some real-world settings. For example, a political election normally contains only a few candidates. A reference library of preference data assembled by Mattei and Walsh \cite{DBLP:conf/aldt/MatteiW13} also reveals such a situation. Out of their 14 sets of election data from the real-life settings, 5 data sets contain less than 10 candidates each (\today).

In this paper, we aim at deriving a general framework for achieving {\fpt} results with respect to the number of candidates.
To this end, we adopt the concept of the class of generalized scoring rules which was introduced by Xia and Conitzer \cite{DBLP:conf/sigecom/XiaC08a}. In particular, we prove that the manipulation, control (all the 22 standard types) and bribery problems are {\fpt} for most of the generalized scoring rules, with respect to the number of candidates. Since many common voting rules fall into the category of the generalized scoring rules, these tractability results hold for these voting rules, among which are all the positional scoring rules (e.g., Borda, $r$-Approval, Veto, Plurality), Copeland$^{\alpha}$, Maximin, Bucklin, Ranked pairs, Schulze, Nanson's and Baldwin's. 
\bigskip

\noindent{\bf{Related Works.}} Hemaspaandra et al. \cite{DBLP:conf/atal/HemaspaandraLM13} recently studied the manipulation, control and bribery problems in Schulze's and Ranked pairs voting systems. They proved that all these strategic problems in Schulze and Ranked pairs voting systems are {\fpt} with respect to the number of candidates. Gaspers et al.~\cite{DBLP:conf/atal/GaspersKNW13} proved that the manipulation problem in Schulze voting system is indeed polynomial-time solvable for any number of manipulators.  Faliszewski et al. \cite{DBLP:journals/jair/FaliszewskiHHR09} studied Copeland$^{\alpha}$ control problems and achieved {\fpt} results for most of the control problems in Copeland$^{\alpha}$ voting with respect to the number of candidates, for every $0\leq \alpha\leq 1$. Besides the manipulation, (22 standard forms of) control and bribery problems, many other strategic voting problems were also studied from the parameterized complexity perspective by researchers. Faliszewski et al. \cite{DBLP:journals/jair/FaliszewskiHH11} studied a multimode control problem (in this model, the strategy individuals are allowed to add votes, delete votes, add candidates, delete candidates, and change votes simultaneously) and proved that this problem is {\fpt} with respect to the number of candidates for voting rules which are integer-linear-program implementable.
Dorn and Schlotter \cite{DBLP:journals/algorithmica/DornS12} proved that the swap bribery problem is {\fpt} with respect to the number of candidates for any voting system which is described by linear inequalities. 
Betzler et al. \cite{DBLP:conf/ijcai/BetzlerHN09} proved that the possible winner problem is {\fpt} with respect to the number of candidates for Maximin, Copeland and Ranked pairs voting rules.
Elkind et al. \cite{DBLP:conf/atal/ElkindFS10} devised a general framework for classifying the fixed-parameter tractability of the winner determination problems for voting rules which are ``distance-rationalizable''. For parameterized complexity of strategic voting problems with respect to other parameters than the number of candidates, we refer to \cite{DBLP:conf/birthday/BetzlerBCN12} for a survey.

\section{Preliminaries}\label{prelim}
{\bf{Common Rules}}. We follow the terminology of the work of Xia and Conitzer \cite{DBLP:conf/ijcai/XiaC09}. Let $\mathcal{C} = \{c_1, . . . , c_m\}$ be a set of candidates. A {\it{linear order}} on $\mathcal{C}$ is a transitive,
antisymmetric, and total relation on $\mathcal{C}$. The set of all linear
orders on $\mathcal{C}$ is denoted by $L(\mathcal{C})$. An $n$-{\it{voter profile}} $P$ on $\mathcal{C}$ consists of $n$ votes defined by linear orders on $\mathcal{C}$. That is, $P = (V_1, . . . , V_n)$, where for every $i\leq n, V_i\in L(\mathcal{C})$. Each vote represents the preferences of the respective voter over the candidates. In particular, a candidate $c$ is ranked higher than another candidate $c'$ in a vote, if the voter prefers $c$ to $c'$. For convenience, we also use $\succ_i$ to denote a vote.  Throughout this paper, we use the words vote and voter interchangeably. The set of all profiles
on $\mathcal{C}$ is denoted by $P(\mathcal{C})$. In the remainder of the paper, $m$
denotes the number of candidates and $n$ denotes the number
of voters. A {\it{(voting) rule}} is a function that maps a voting profile to a single candidate, the winner.

\begin{itemize}
\item Positional scoring rules. Every candidate gets a specific score from each vote according to the position of the candidate in the vote. More specifically, a scoring voting rule is defined by a scoring vector $\vec{\lambda}=(\lambda_1,\lambda_2,...,\lambda_m)$ with $\lambda_1\geq \lambda_2\geq,...,\geq \lambda_m$.
    The candidate ranked in the $i$-th position in a vote gets $\lambda_i$ points from this vote. The winner is the candidate with the highest score.\footnote{If more than one candidate has the highest score, we break the tie by a fixed deterministic tie-breaking rule. This applies to all the other voting rules discussed in this paper.} Following are some well-known positional scoring rules.

\begin{tabular}{l|l}
voting rules & scoring vectors \\ \hline

Borda & $(m-1,m-2,...,0)$ \\

$r$-Approval & $(1,...,1,0,...,0)$ with exactly $r$ many 1's. \\

Plurality & $(1,0,0,...,0)$ \\

Veto & $(1,1,...,1,0)$ \\
\end{tabular}


%


%
%


\item Maximin. For two candidates $c$ and $c'$, let $N(c,c')$ denote the number of votes which prefer $c$ to $c'$. We say $c$ beats $c'$ if $N(c,c')>N(c',c)$. The maximin score of a candidate $c$
is defined as $\min_{c'\in \mathcal{C}\setminus \{c\}}N(c,c')$. The winner is the candidate with the highest maximin score.

\item Copeland$^\alpha$.
Each candidate is compared with every other candidate. In each comparison, the one which beats its rival gets one point and its rival gets zero points. If they are tied, both get $\alpha$ points. The winner is the candidate with the highest score.

\item Instant-runoff (STV): If a candidate is ranked in the first position by more than half of the votes, the candidate wins.
Otherwise, the candidate which is ranked in the first position by the least number of votes is eliminated.
This is repeated until there is a candidate which is ranked in the first position by more than half of the votes.
\end{itemize}
\bigskip

{\bf{Generalized Scoring Rules.}} In the following, we give the definition of the class of the generalized scoring rules which was introduced by Xia and Conitzer \cite{DBLP:conf/ijcai/XiaC09}.

Let $K = \{1, . . . , k\}$. For any $\vec{a},\vec{b} \in \mathbb{R}^k$, we say that $\vec{a}$ and $\vec{b}$ are
{\it{equivalent}} with respect to $K$, denoted by $\vec{a} \sim _K \vec{b}$, if for
any $i, j \in K, \vec{a}[i] > \vec{a}[j] \Leftrightarrow \vec{b}[i] > \vec{b}[j]$ and $\vec{a}[i] < \vec{a}[j] \Leftrightarrow \vec{b}[i] < \vec{b}[j]$ (where $\vec{a}[i]$ denotes the $i$-th component of the vector $\vec{a}$, etc.).

A function $g : \mathbb{R}^k \rightarrow \mathcal{C}$ is {\it{compatible}} with $K$ if
for any $\vec{a},\vec{b} \in \mathbb{R}^k, \vec{a} \sim _K \vec{b} \Rightarrow g(\vec{a})
= g(\vec{b})$.

Let $k \in \mathbb{N}, f : L(\mathcal{C})\rightarrow \mathbb{R}^k$, and $g : \mathbb{R}^k \rightarrow \mathcal{C}$ where $g$ is compatible with
$K$. The functions $f$ and $g$ determine the
{\it{generalized scoring rule}} $GS(f, g)$ as follows. For any profile of votes
$V_1, . . . , V_n \in L(\mathcal{C}), GS(f, g)(V_1, . . . , V_n) =g(\sum_{i=1}^n f(V_i))$. That is, every vote results in a vector of scores according to
$f$, and $g$ decides the winner based on comparisons between the total scores. Here we call $f$ the {\it{generalized scoring function}} and $g$ the {\it{decision function}}. Moreover, we say that $GS(f, g)$ is of {\it{order}} $k$ and $\sum_{i=1}^n f(V_i)$ is the {\it{total score vector}} of the profile according to $GS(f,g)$. For convenience, we also use $f(P)$ to denote $\sum_{i=1}^n f(V_i)$, where $P$ is the profile with the votes $V_1,...,V_n$.

Many common voting rules fall into the category of the generalized scoring rules. For example, for the Borda voting rule, the corresponding generalized scoring rule is specified as follows.

$k_{Borda} = m$.

$f_{Borda}(V) = (s(V, c_1), . . . , s(V, c_m))$, where $s(V,c_i)$ is the score of $c_i$ from the vote $V$.

$g_{Borda}(f_{Borda}(P)) =argmax_i (f_{Borda}(P))$, that is, the winner is the one with highest Borda score.

We point out that the class of generalized scoring rules also encapsulates many runoff voting rules, that is, voting rules where the winners are determined via several rounds. A typical example is the STV voting rule. Moreover, the definition of generalized scoring rules can be
generalized to voting rules selecting more than one winner \cite{DBLP:conf/sigecom/XiaC08a}. The following lemma summarizes the common voting rules known to fall into the category of generalized scoring rules.

\begin{lemma}\label{commonrulesfall}
\cite{procaccia13jair,DBLP:conf/ijcai/XiaC09} The following voting rules are generalized scoring rules: all the positional scoring rules, Copeland$^{\alpha}$, Maximin, STV, Baldwin's, Nanson's, Ranked pairs, Bucklin.
\end{lemma}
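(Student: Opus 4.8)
The plan is to prove the lemma rule by rule: for each voting rule in the list I would exhibit the order $k$, a generalized scoring function $f:L(\mathcal{C})\to\mathbb{R}^k$, and a decision function $g:\mathbb{R}^k\to\mathcal{C}$, and then verify that $g$ is compatible with $K=\{1,\dots,k\}$ — that is, that $g(f(P))$ depends on the total score vector $f(P)$ only through the pattern of pairwise comparisons ($>$, $<$, $=$) among its coordinates. Since $m$ is treated as a fixed quantity here, $k$ is allowed to depend on $m$ in an arbitrary (even exponential) way. The positional scoring rules are immediate: take $k=m$, let $f(V)$ record the positional scores that $V$ assigns to $c_1,\dots,c_m$, and let $g$ output the index of the coordinate of largest value (using the fixed deterministic tie-break); this is exactly the Borda construction already displayed above, and the chosen index clearly depends only on the order pattern of the coordinates.

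For the three weighted-tournament rules Maximin, Copeland$^{\alpha}$ and Ranked pairs, I would index $K$ by the ordered pairs of distinct candidates and set $f(V)[(i,j)]=\mathbf{1}[c_i\succ_V c_j]$, so that $f(P)[(i,j)]=N(c_i,c_j)$. The key observation is that $N(c_i,c_j)+N(c_j,c_i)=n$, hence $c_i$ beats $c_j$ iff $f(P)[(i,j)]>f(P)[(j,i)]$, and moreover the margin $N(c_i,c_j)-N(c_j,c_i)=2f(P)[(i,j)]-n$ is a strictly increasing function of $f(P)[(i,j)]$, so the ordering of all pairs by margin coincides with the ordering of the corresponding coordinates of $f(P)$. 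Thus $g$ can reconstruct everything it needs purely from coordinate comparisons: for Maximin, output the $i$ maximizing $\min_{j\ne i} f(P)[(i,j)]$; for Copeland$^{\alpha}$, score each $i$ by its number of wins plus $\alpha$ times its number of ties and take the maximum; for Ranked pairs, process the ordered pairs in decreasing order of their $f(P)$-coordinate, locking each one in unless it creates a cycle, and return the resulting source. Every branch taken by $g$ is a comparison between coordinates of $f(P)$, so compatibility with $K$ holds, with the fixed deterministic tie-break handling equal-margin pairs.

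For the round-based rules STV, Baldwin's, Nanson's and Bucklin the point is that the single vector $f(P)$ must already contain enough information to replay every reachable stage of the elimination, not only the first. For STV, Baldwin's and Nanson's I would index $K$ by the pairs $(S,c)$ with $c\in S\subseteq\mathcal{C}$ and let $f(V)[(S,c)]$ be the score of $c$ in the sub-election on $S$ induced by $V$ — which is $1$ if $c$ is $V$'s favourite among $S$ and $0$ otherwise for STV, and the Borda score of $c$ within $S$ for Baldwin's and Nanson's — so that $f(P)[(S,c)]$ is exactly the relevant score of $c$ once $\mathcal{C}\setminus S$ has been eliminated; for Bucklin I would index by pairs $(c,\ell)$ and let $f(P)[(c,\ell)]$ count the votes ranking $c$ among their top $\ell$. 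To realize the ``strict majority'' test of STV and Bucklin and the ``below the average'' test of Nanson's using only coordinate comparisons, I would plant a few constant coordinates: a coordinate $d$ with $f(V)[d]=1/2$, so $f(P)[d]=n/2$, and coordinates $d_t$ with $f(V)[d_t]=(t-1)/2$, so $f(P)[d_t]=(t-1)n/2$ equals the average Borda score over a set of $t$ candidates. Then $g$ just simulates the rule: maintain the surviving set $S$ (starting at $\mathcal{C}$), perform the majority test / eliminate the minimum-scoring candidate / eliminate all candidates below average by comparing the appropriate coordinates of $f(P)$, shrink $S$ accordingly, and repeat; $S$ strictly decreases so $g$ terminates, and every decision is again a comparison among coordinates of $f(P)$, so $g$ is compatible with $K$. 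The boundary cases — ties in an elimination step, an empty set of below-average candidates in Nanson's, reaching $|S|=1$ — are resolved exactly as in the adopted formalization of each rule, once more via the fixed deterministic tie-break, which does not disturb compatibility.

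The hard part, and what keeps this from being a triviality, is precisely this last family: one must notice that a single total score vector can simultaneously carry the scores of all $2^{m}$ possible surviving coalitions, and that the threshold and averaging tests these rules use — which superficially compare a score to a number such as $n/2$ rather than to another score — can be rephrased as comparisons between coordinates by inserting the appropriate constant coordinates. The weighted-tournament and positional cases, by contrast, are routine once the coordinate-order viewpoint is adopted. These constructions are essentially those of Xia and Conitzer~\cite{DBLP:conf/ijcai/XiaC09} and Procaccia et al.~\cite{procaccia13jair}, so the proof ultimately reduces to assembling them and checking compatibility uniformly.
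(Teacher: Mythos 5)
Your proposal is correct, but it takes a different route from the paper: the paper does not prove this lemma at all---it is quoted with citations to Xia--Conitzer and Mossel et al.---and the only constructions the paper itself supplies appear later (in the proof of Lemma~\ref{commonrulesfpt1}), where Bucklin, Nanson's and Baldwin's are specified explicitly, the multiround rules are handled via Xia--Conitzer's composition theorem for runoff rules, and everything else is deferred to the literature. You instead build all the generalized scoring representations directly: positional rules via the score vector, the weighted-tournament rules via coordinates indexed by ordered pairs (correctly noting that margins and winning-vote counts induce the same order because $N(c_i,c_j)+N(c_j,c_i)=n$), and the elimination rules via coordinates indexed by $(S,c)$ so that one vector carries the scores of every possible surviving set. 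Your constant-coordinate trick (a component summing to $n/2$, and components summing to $(t-1)n/2$ for the Nanson average over $t$ candidates) is a genuinely valuable refinement: the paper's own sketched decision functions compare a coordinate to $n/2$ or to the average of the coordinates, and such tests are not, strictly speaking, functions of the order pattern alone (two $\sim_K$-equivalent vectors can sit differently relative to their average), so compatibility as defined requires exactly the kind of planted threshold coordinates you introduce, or the original Xia--Conitzer constructions. What the paper's route buys is generality and brevity---the composition theorem covers any finitely-many-round runoff rule at once---whereas your monolithic construction is self-contained, makes the compatibility verification explicit, and yields the order bound ($k\le m\cdot 2^{m}+O(m)$, say) needed later without appeal to the cited proofs.
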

%
%
%
%
%

A common voting rule which is precluded by the class of generalized scoring rules is the Young's voting rule \cite{DBLP:conf/ijcai/XiaC09}. Goldsmith~et~al.~\cite{AAAI14a} recently studied a new class of voting
rules (rank-dependent scoring rules, RDSRs for short) and showed by an example that there are voting rules in this class which do not fall into the category of of the generalized scoring rules.

\bigskip

{\bf{Strategic Behaviors.}} We make use of the standard definitions of strategic behaviors in computational social choice. In the following, we briefly introduce the problems discussed in this paper. We refer to \cite{DBLP:journals/mlq/ErdelyiNR09,DBLP:journals/jair/FaliszewskiHHR09} for all the detailed definitions, including the manipulation, bribery and all the 22 standard control problems. In all these problems, we have as input a set $\mathcal{C}\cup \{p\}$ of candidates where $p$ is a distinguished candidate, and a profile $P=\{V_1,...,V_n\}$ of votes. The question is whether the distinguished candidate $p$ can become a winner (in this case, $p$ is not the winner in advance) or become a loser (in this case, $p$ is the winner in advance) by imposing a specific strategic behavior on the voting. The former case of making $p$ a winner is called a {\it{constructive}} strategic behavior, and the latter case is called a {\it{destructive}} strategic behavior. Observe that if the problem of a specific constructive strategic behavior is {\fpt} with respect to the number of candidates, so is the corresponding destructive case. To check this, suppose that we have an {\fpt} algorithm $Algo$ for a specific constructive strategic behavior problem. Then, we can guess a candidate $p'\in \mathcal{C}$ and run the algorithm $Algo$ but with the distinguished candidate being $p'$. Since we have at most $m$ guesses, the destructive case is solved in {\fpt}-time. Due to this fact, we consider only the problems of constructive strategic behaviors.
\medskip

{\bf{Manipulation.}} 
In addition to the aforementioned input, we have a set $\mathcal{V}'$ of voters who did not cast their votes yet. We call these voters {\it{manipulators}}. The question is whether the manipulators can cast their votes in a way so that $p$ becomes a winner.
\medskip

{\bf{Bribery.}} 
The bribery problem asks whether we can change at most $\kappa$ votes (in any way but still linear orders over the candidates) so that $p$ becomes a winner, where $\kappa\in \mathbb{N}$ is also a part of the input.
\medskip

{\bf{Control.}} There are 11 standard constructive control behaviors in total. Among them 7 are imposed on the candidate set and 4 are imposed on the vote set. We first discuss the candidate control cases. In these scenarios, we either add some candidates (limited or unlimited), or delete some candidates, or partition the candidate set into two sets (runoff or non-runoff partitions with ties-promote or ties-eliminate models). Since the number of the candidates is bounded by the parameter $m$, we can enumerate all the possibilities of performing the control strategic behaviors in {\fpt}-time with respect to $m$. Thus, if the winner is computable in {\fpt} time (which holds for all the common voting rules studied in this paper) with respect to $m$, the candidate control problems are {\fpt}. In the following, we restrict our attention to the vote control problems.

%
%
%
\smallskip

{\bf{Deleting votes:}} The problem of control by deleting votes asks whether we can remove at most $\kappa$ votes from the given profile so that $p$ becomes the winner, where $\kappa\in \mathbb{N}$ is also a part of the input.
\smallskip

{\bf{Partition votes:}} In the control by partitioning of votes, we are asked the following question: is there a partition of $P$ into $P_1$ and $P_2$ such that $p$ is the winner of the two-stage election where the winners of election $(\mathcal{C}\cup \{p\},P_1)$ compete
against the winners of $(\mathcal{C}\cup \{p\},P_2)$? We distinguish the ties-promote model and the ties-eliminate model. In the ties-promote model, all the candidates which are tied as winners in the first-stage election 
are promoted to the second stage election. In the ties-eliminate model, if there is more than one winner, then all these winners will not be moved to the second stage election.

{\bf{Adding votes:}} In addition to the aforementioned input, we have another list $P'$ of {\it{unregistered votes}}, and are asked whether we can add at most $\kappa$ votes in $P'$ to $P$ so that the distinguished candidate $p$ becomes the winner.


\section{The General Framework}
In this section, we investigate the parameterized complexity of strategic behaviors under the class of generalized scoring rules. Our main result is summarized in the following theorem.

\begin{theorem}\label{thmframework}
Let $\varphi=(f,g)$ be a generalized scoring rule of order $k$, where $f$ is the generalized scoring function and $g$ is the decision function. If $k$ is bounded by a function of the number of candidates, and $f$ and $g$ are computable in \fpt-time with respect to the number of candidates, then the manipulation, bribery and all the 22 standard control problems are \fpt under $\varphi$, with respect to the number of candidates.
\end{theorem}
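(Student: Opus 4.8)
The plan is to reduce every one of the six relevant problems to the feasibility of an integer linear program whose number of variables is bounded by a function of $m$, and then invoke Lenstra's algorithm. By the reductions already given in Section~\ref{prelim}, it suffices to treat the \emph{constructive} versions of manipulation, bribery, and the four \emph{vote}-control problems (deleting votes, adding votes, and partition of votes under the ties-promote and the ties-eliminate models): the destructive variants and all seven candidate-control variants follow by brute-force enumeration over the candidates, respectively over the $2^{O(m)}$ ways of modifying the candidate set, composed with an \fpt{} winner-determination routine, which exists because $f$, $g$, and $k$ are bounded/computable within \fpt{} time.

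The argument will rest on three facts. (i) Since $g$ is compatible with $K$, the winner $g(f(P))$ of a profile depends on $P$ only through the $\sim_K$-equivalence class of the total score vector $f(P)\in\mathbb{R}^k$; such a class is just a total preorder on $K=\{1,\dots,k\}$, so there are at most $k^k$ of them, a quantity bounded by a function of $m$ because $k$ is. Given a class $E$ I can write down a concrete integer representative (rank the indices according to the preorder) and so evaluate $g(E)$ in \fpt{} time. (ii) Since $f(P)=\sum_V n_V f(V)$, the total score vector depends on $P$ only through the multiplicities $n_V$ of the distinct linear orders $V$ over the candidate set, whose number is a function of $m$; the vectors $f(V)$ are \fpt{}-computable, and after clearing a common denominator I may assume $f(V)\in\mathbb{Z}^k$ (this preserves all $\sim_K$-relations, hence the rule), so every achievable total score vector is integral and the predicate ``the vector $x$ lies in class $E$'' is expressed by the linear (in)equalities $x[i]\ge x[j]+1$, $x[i]=x[j]$, or $x[i]\le x[j]-1$, one per pair $i,j\in K$. (iii) Integer linear programming with a number of variables bounded by a function of $m$ is solvable in \fpt{} time (Lenstra).

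Each problem is then solved by looping over all equivalence classes $E$ with $g(E)=p$ and testing, via one integer linear program, whether the permitted modification can push the total score vector into $E$. The program's variables count the votes of each type taking part in the modification: for manipulation, the numbers $x_V\ge 0$ of manipulators casting each order $V$, with $\sum_V x_V=|\mathcal V'|$ and $f(P)+\sum_V x_V f(V)\in E$; for control by adding (resp.\ deleting) votes, the numbers $a_W\le n'_W$ of unregistered votes added (resp.\ $d_V\le n_V$ of votes deleted), with $\sum_W a_W\le\kappa$ (resp.\ $\sum_V d_V\le\kappa$) and the adjusted total vector in $E$; for bribery, integer net changes $\Delta_V\ge -n_V$ with $\sum_V\Delta_V=0$ and $\sum_V\max(0,\Delta_V)\le\kappa$ (the last linearized by auxiliary variables), and $f(P)+\sum_V\Delta_V f(V)\in E$. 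In each case the number of variables is bounded by a function of $m$, so by fact (iii) the program is decided in \fpt{} time, and the total running time is (number of classes) times (ILP time) times (time to evaluate $f$ and $g$), which is \fpt{} in $m$.

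The one genuinely delicate case will be partition of votes, where the outcome is determined by a nested two-stage election rather than a single total score vector. There I additionally guess the winner $w_1$ of the subelection on $P_1$, the winner $w_2$ of the subelection on $P_2$, and --- according to the relevant ties-promote/ties-eliminate bookkeeping applied to those two subelections --- the set $S\subseteq\{w_1,w_2\}$ of candidates advancing to the final stage; I reject unless $p$ wins the final-round election among the candidates in $S$ under the profile induced by $P$, which does not involve the partition at all and is \fpt{}-computable. I then guess classes $E_1,E_2$ with $g(E_1)=w_1$ and $g(E_2)=w_2$ and solve the feasibility program with variables $x_V$ = number of type-$V$ votes placed in $P_1$, subject to $0\le x_V\le n_V$, $\sum_V x_V f(V)\in E_1$, and $\sum_V (n_V-x_V) f(V)\in E_2$. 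All the additional objects enumerated here range over a set of size bounded by a function of $m$, so the procedure stays \fpt. The chief obstacles I expect are precisely this nested structure for partition control and the bookkeeping needed to verify that \emph{every} quantity enumerated over (equivalence classes, vote types, winner/promotion configurations, ILP variables) is bounded by a function of $m$ alone --- which is exactly where the hypotheses that $k$ is $m$-bounded and that $f$ and $g$ are \fpt{}-computable get used.
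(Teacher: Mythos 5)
Your core construction is the same as the paper's: restrict to the constructive vote-control, manipulation and bribery cases, enumerate the $\sim_K$-equivalence classes of total score vectors for which $g$ makes $p$ win (you bound them by $k^k$, the paper by $3^{\binom{k}{2}}$ -- both are functions of $m$), and for each class solve one ILP whose variables count votes per linear order, invoking Lenstra. Your minor deviations are benign or even slightly cleaner: net-change variables $\Delta_V$ for bribery instead of the paper's $(m!)^2$ transfer variables $x_{\succ}^{\succ'}$ (the two formulations are equivalent for feasibility), and an explicit integrality/representative argument that the paper leaves implicit. Your treatment of partition of votes under ties-eliminate (guess the second-stage opponent, then guess a class pair $E_1,E_2$ and constrain both $\sum_V x_V f(V)$ and $\sum_V (n_V-x_V)f(V)$) is exactly the paper's.

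The genuine gap is the ties-promote model. You restrict the set of candidates advancing to the final round to $S\subseteq\{w_1,w_2\}$, i.e., at most one promoted candidate per subelection. But in the ties-promote model \emph{all} candidates tied as winners of a subelection are promoted, so the final round can involve up to $m$ candidates from each side, and the correctness of the guess requires certifying that the set of co-winners of $P_1$ (resp.\ $P_2$) is exactly a prescribed set $C_1$ (resp.\ $C_2$) -- something the single-valued decision function $g$ cannot express, since $g(E_1)=w_1$ only pins down one winner. This is precisely where the paper has to bring in an extra tool: it switches to the multiwinner variant of generalized scoring rules (citing Xia and Conitzer) and enumerates disjoint subsets $C_1,C_2\subseteq\mathcal{C}$ with $p\in C_1$, discarding enumerations where $p$ does not win the final election restricted to $C_1\cup C_2$, and then writes ILP constraints forcing $C_1$ and $C_2$ to be the co-winner sets of $P_1$ and $P_2$. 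Your sketch would fail on any instance where promotion of a tied set of winners is what allows (or prevents) $p$'s final victory; to repair it you need the multiwinner machinery, or at least an enumeration over promoted candidate sets rather than over single winners. (Relatedly, even in the ties-eliminate case one must be able to recognize that a subelection winner is unique before promoting it; the paper is terse on this point too, but it is the ties-promote case where the single-winner formalism demonstrably does not suffice.)
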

\begin{proof}
According to the discussion in Sec. \ref{prelim}, we can restrict our attention to the constructive strategic behaviors of manipulation, bribery, control by adding/delting/partition votes. We derive \fpt-algorithms for these problems. Our algorithms rely on the theorem by Lenstra \cite{lenstra83}, which implies that the integer linear programming (ILP) is {\fpt} with respect to the number of variables. Specifically, we reduce the instances of the stated problems to instances of ILP with the number of variables bounded by some function in $m$.

Let $\psi$ be the function in $m$ with $k\leq \psi({m})$. Due to the definition of the generalized scoring rules, we need focus on at most $3^{k\choose 2}$ different types of total score vectors (for each pair of subindices $i,j\in \{1,2,...,k\}$, we have either $\vec{a}[i]>\vec{a}[j]$ or $\vec{a}[i]=\vec{a}[j]$, or $\vec{a}[i]<\vec{a}[j]$). Here we say two vectors $\vec{a}, \vec{b}\in \mathbb{R}^k$ have the same type if they are equivalent with respect to $K=\{1,2,...,k\}$. Since the decision function $g$ is computable in \fpt-time with respect to $m$, we can enumerate all the types of total score vectors in the final election (that is, elections after performing strategic behaviors) which result in $p$ being the winner. Each enumerated total score vector $\vec{a}$ is specified by, for each pair of subindices $i,j$, either $\vec{a}[i]>\vec{a}[j]$ or $\vec{a}[i]=\vec{a}[j]$, or $\vec{a}[i]<\vec{a}[j]$. Then, we reduce the subinstances to ILP instances. To this end, we assign variables to different types of votes and derive restrictions to ensure that the currently enumerated total score vector coincides with the final election. If the given instance is a true-instance, then at least one of the total score vector leads to a correct answer. We fix $\vec{a}$ as the currently enumerated total scoring vector. In the following, we show how to reduce these instances to ILP instances.
\medskip

{\bf{Manipulation.}} Let $P=(V_1,V_2,...,V_n)$ be the profile of non-manipulators, and let $\vec{b}=f(P)$ be the total score vector of $P$. Clearly, $\vec{b}$ can be calculated in {\fpt} time since the generalized scoring function is computable in {\fpt} time.
To reduce the manipulation problem to ILP, we assign variables to all the $m!$ possible linear orders over the candidates, one for each. Let $x_{\succ}$ denote the variable assigned to the linear order $\succ$. These variables indicate how many manipulators cast their votes which are defined as $\succ$. Now we introduce the restrictions.

(1) Let $t$ be the number of manipulators, we have

\[\displaystyle{\sum_{\succ}x_{\succ}}=t\]

Here, $\succ$ runs through all the linear orders in $L(\mathcal{C}\cup \{p\})$.

(2) For convenience, for each linear order $\succ$, we use $f_{\succ}[i]$ instead of $f(\succ)[i]$ to denote the $i$-th entry of the score vector of $\succ$ by the generalized scoring function $f$. For each pair $i,j\in \{1,2,...,k\}$ with $\vec{a}[i]-\vec{a}[j]\rhd 0$, where $\rhd\in \{>,=,<\}$ we have

\[\vec{b}[i]+\displaystyle{\sum_{\succ}(f_{\succ}[i]\cdot x_{\succ})}-\vec{b}[j]-\displaystyle{\sum_{\succ}(f_{\succ}[j]\cdot x_{\succ})}\rhd 0\]


{\bf{Bribery.}} 
We divide the votes into $P_{\succ_1},P_{\succ_2},...,P_{\succ_t}$ with $P_{\succ_i}$ containing all the votes defined as the linear order ${\succ_i}$.
For every two distinguished linear orders $\succ$ and $\succ'$, we assign a variable denoted by $x_{\succ}^{\succ'}$, which specifies how many voters from $P_{\succ}$ are bribed to recast their votes as ${\succ'}$. Clearly, we have at most $m!^2$ variables. 
For each $\succ$, let $N(\succ)$ be the number of the votes which are defined as $\succ$ after changing the votes according to the variables assigned to the instance. More precisely, $N(\succ)$ is given by 

\[N(\succ)=|P_{\succ}|-\displaystyle\sum_{\succ'\neq \succ}x_{\succ}^{\succ'}+\sum_{\succ'\neq \succ}x_{\succ'}^{\succ}\]

Now we introduce the restrictions. First, we have the following restriction since we can bribe at most $\kappa$ votes in total.

\[\displaystyle\sum_{\succ\neq \succ'} x_{\succ}^{\succ'} \leq \kappa\]
Here, $\succ$ and $\succ'$ with $\succ\neq \succ'$ run through all the linear orders over the candidates.

In addition, for each $P_{\succ}$, at most $|P_{\succ}|$ can be bribed. Hence, for each $P_{\succ}$, we have

\[\displaystyle\sum_{\succ'\neq \succ} x_{\succ}^{\succ'} \leq |P_{\succ}|\]

Finally, 
for every~$i,j$~\text{with}~$\vec{a}[i]- \vec{a}[j]\rhd 0$, where $\rhd \in \{>,<,=\}$, we have

\[\displaystyle\sum_{\succ} N(\succ)\cdot f_{\succ}[i]-\displaystyle\sum_{\succ} N(\succ)\cdot f_{\succ}[j]\triangleright 0\]
%
%

{\bf{Control by Adding/Deleting Votes.}} 
We first consider the adding votes case. We divide the unregistered votes into parts each containing all the votes defined as the same linear order. 
For each part containing the votes defined as linear order $\succ$, we assign a variable $x_{\succ}$, which specifies how many votes from this part are included in the solution. Now we introduce the restrictions. 

For each linear order $\succ$, let $N_1(\succ)$ be the number of registered votes defined as $\succ$, and $N_2(\succ)$ be the number of unregistered votes defined as $\succ$.
Since we can add at most $\kappa$ votes, we have the following restriction.

\[
\displaystyle\sum_{\succ}x_{\succ} \leq \kappa
\]

In addition, for each $\succ$, we have

\[x_{\succ}\leq N_2(\succ)\]

Finally, for every pair $i,j$ with $\vec{a}[i] -\vec{a}[j]\rhd 0$, where $\rhd \in \{>,<,=\}$, we have,

\[\displaystyle\sum_{\succ} (x_{\succ}+N_1(\succ))\cdot f_{\succ}[i]-\displaystyle\sum_{\succ} (x_{\succ}+N_1(\succ))\cdot f_{\succ}[j]\triangleright 0\]

The algorithm for the deleting votes case is analogous.

{\bf{Partition of Votes with Ties-Eliminate.}}
This case is slightly different from the above cases. First observe that $p$ has chance to be the final winner if $p$ is a temporary winner in at least one of the two subelections. 
Therefor, to solve the problem, we enumerate all possible candidates $p'$ which will compete with $p$ in the second-stage election. We immediately discard the enumerations for which $p$ is not the winner when competing with $p'$. The above procedure clearly takes polynomial time and leads to polynomially many subinstances, each asking whether we can partition the profile into two parts $P_1$ and $P_2$ so that $p$ is the winner in the voting with profile $P_1$, and $p'$ is the winner in the voting with profile $P_2$. Therefore, instead of enumerating all the possible total score vectors as discussed for the above controls, we enumerate all the possible vector pairs $\vec{a},\vec{b}\in \mathbb{R}^k$, where $\vec{a}$ is the potential total score vector for the voting profile $P_1$, and $\vec{b}$ is the potential total score vector for voting profile $P_2$. We discard all the enumerations for which $p$ (resp. $p'$) is not the winner with respect to $\vec{a}$ (resp. $\vec{b}$). Now, we adopt the similar method as discussed above to reduce each subinstance to an ILP instance. To this end, again we partition the votes into parts each containing all the votes defined as the same linear order over the candidates. We still use $x_{\succ}$ to denote the variable assigned to the part $P_{\succ}$ of votes defined as $\succ$. Here, $x_{\succ}$ indicates how many votes in $P_{\succ}$ go to $P_1$. 
The restrictions are as follows.
For each $\succ$, we have

\[x_{\succ}\leq |P_{\succ}|\]

For every pair $i,j$ with $\vec{a}[i]- \vec{a}[j]\rhd 0$, we have,

\[\displaystyle\sum_{\succ}(x_{\succ}\cdot f_{\succ}[i])-\displaystyle\sum_{\succ}(x_{\succ}\cdot f_{\succ}[j])\rhd 0\]

This equality is to ensure the that $p$ is the winning candidate in the voting with profile  $P_1$. The following equality is to ensure that $p'$ is the winning candidate in the voting with profile  $P_2$.

For every pair $i,j$ with $\vec{b}[i]- \vec{b}[j]\rhd 0$, where $\rhd \in \{>,<,=\}$, we have,

\[\displaystyle\sum_{\succ}(|P_{\succ}|-x_{\succ})\cdot f_{\succ}[i])-\displaystyle\sum_{\succ}(|P_{\succ}|-x_{\succ})\cdot f_{\succ}[j])\rhd 0\]

%
%
%
%
%

The proof for the ties-promote model is similar. However, in this case, we should adopt the multiwinner variant of the generalized scoring rules (which is possible  \cite{DBLP:conf/sigecom/XiaC08a}) as a tool. Besides, instead of enumerating a candidate $p'$, we need to enumerate all the pairwise disjoint subsets $C_1$ and $C_2$ of the candidates, with $p\in C_1$. If $p$ is not the winner in the election restricted to $C_1\cup C_2$, we discard the enumeration at the moment. Moreover, the restrictions are derived to making all the candidates in $C_1$ the co-winners in the voting profile $P_1$, and $C_2$ the co-winners in the voting profile $P_2$. \end{proof}

To use the framework described in Theorem \ref{thmframework}, we require that the order of the generalized scoring rule must be bounded by a function of the number of the candidates, and the scoring function and the decision function must be computable in {\fpt} time with respect to the number of candidates. In the following, we show that both the conditions are fulfilled for all the voting rules stated in Lemma \ref{commonrulesfall}.

\begin{lemma}\label{commonrulesfpt1}
For all the positional scoring rules, Copeland$^{\alpha}$, Maximin, STV, Baldwin's, Nanson's, Ranked pairs, Bucklin, the orders of the corresponding generalized scoring rules are bounded by functions of the number of candidates, and the scoring and decision functions of the corresponding generalized scoring rule are computable in {\fpt}-time.
\end{lemma}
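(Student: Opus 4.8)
The plan is to treat the rules in two groups according to the shape of their generalized scoring representations $(f,g)$ (which exist by Lemma~\ref{commonrulesfall} and \cite{DBLP:conf/ijcai/XiaC09,procaccia13jair}), making the representation explicit enough in each case to read off the order $k$ and the running times of $f$ and $g$. The guiding principle is that, since $g$ must be compatible with $K=\{1,\dots,k\}$, it may depend only on the $\sim_K$-type of its argument; so for each rule I will pick the coordinates of $f$ so that the winner of a profile $P$ is already a function of the type of $f(P)$ alone, and let $g$ decode it. For $f$, ``\fpt-time'' will mean that $f(V)$ is produced in time $h(m)$ for a single vote $V$ (whence $f(P)=\sum_i f(V_i)$ costs $h(m)\cdot n$); for $g$ it will mean a simulation running in time polynomial in the length of the argument and in $m$.

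For the non-iterative rules: for a positional scoring rule with vector $\vec\lambda$ (assumed computable from $m$) take $k=m$ with $f(V)[c_i]=\lambda_{\mathrm{pos}(c_i,V)}$, so that the type of $f(P)$ is the score ranking and, with the fixed tie-break, determines the winner; for Maximin and Copeland$^\alpha$ take $k=m(m-1)$ with one coordinate per ordered pair, $f(V)[(c_i,c_j)]$ the indicator that $V$ ranks $c_i$ above $c_j$, so $f(P)[(c_i,c_j)]=N(c_i,c_j)$ and the type fixes, for every pair, which of $N(c_i,c_j),N(c_j,c_i)$ is larger (hence the whole majority/tie pattern and all Copeland scores) and, for each candidate, which coordinate realizes its row-minimum (hence the order of the Maximin scores); for Ranked pairs use the same $m(m-1)$ coordinates but with $f(V)[(c_i,c_j)]=+1$ when $V$ ranks $c_i$ above $c_j$ and $-1$ otherwise, so $f(P)[(c_i,c_j)]$ is the margin and $f(P)[(c_j,c_i)]=-f(P)[(c_i,c_j)]$, whence the type fixes the direction of every majority and the ordering of the majorities by margin, after which the deterministic locking procedure and its source are type-determined; for Bucklin take coordinates $(c_i,\ell)$, $\ell\in\{1,\dots,m\}$, with per-vote value $+1$ if $V$ ranks $c_i$ among its top $\ell$ and $-1$ otherwise, plus one coordinate $z$ with $f(V)[z]=0$, so that $f(P)[(c_i,\ell)]>f(P)[z]$ exactly when a strict majority rank $c_i$ in the top $\ell$, making the Bucklin scores and the relevant tie-breaks type-determined. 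In all these cases $g$ reads off the winner in $\mathrm{poly}(m)$ time, $f(V)$ is computed in $\mathrm{poly}(m)$ time, and $k\in\{m,\ m(m-1),\ m^2+1\}$.

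The iterative rules STV, Baldwin's and Nanson's are the main obstacle, because the winner depends on a whole cascade of eliminations rather than one comparison of totals. The remedy is to make $f$ record the relevant scores inside the subelection on \emph{every} nonempty ``surviving'' set $S\subseteq\mathcal C$: a coordinate $(S,c)$, $c\in S$, equal to the indicator that $c$ is ranked first among $S$ by $V$ (for STV), or equal to the number of members of $S$ that $V$ ranks below $c$, i.e.\ $c$'s Borda score within $S$ (for Baldwin's and Nanson's). This gives $k=\sum_{\emptyset\neq S\subseteq\mathcal C}|S|=m2^{m-1}$, still a function of $m$; for Nanson's add a further $m2^{m-1}$ coordinates $(S,c)$ with per-vote value $|S|\cdot f(V)[(S,c)]-\sum_{c'\in S}f(V)[(S,c')]$ together with one always-zero coordinate, so that ``$c$ has strictly below-average Borda score within $S$'' becomes a comparison with the zero coordinate, hence type-determined. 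Then $g$ simulates the rule: starting from $S=\mathcal C$, each round removes from $S$ the candidate $c\in S$ minimizing $f(P)[(S,c)]$ (STV---it is equivalent and simpler to iterate until $|S|=1$, since a candidate with a first-place majority among $S$ is never the minimum and hence never eliminated), the candidate of lowest Borda score within $S$ (Baldwin's), or all strictly-below-average-Borda candidates (Nanson's); every ``minimum'' and ``below average'' here is a comparison among coordinates indexed by the current $S$ (and the zero coordinate), so the updated $S$, and by induction on the number of rounds the final surviving candidate, is a function of the type of $f(P)$. Hence $g$ decodes the winner in $\mathrm{poly}(m)$ time on a length-$k$ vector, and $f(V)$ is computable in $O(mk)=2^m\cdot\mathrm{poly}(m)=h(m)$ time.

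I expect the two genuinely delicate points to be: (i) re-expressing each threshold test native to a rule but not literally a comparison of two entries of $f(P)$---``strict majority'' for Bucklin and ``below the average'' for Nanson's---as a comparison of two coordinates by adding auxiliary coordinates (a single always-zero coordinate, plus the scaled-difference coordinates for Nanson's), which is exactly what allows $g$ to be chosen compatible with $K$; and (ii) verifying for the elimination rules that the \emph{entire} sequence of surviving sets, not just its first step, depends only on the $\sim_K$-type, which follows by a short induction on the number of rounds. Given these, the order bounds and the $h(m)\cdot|\ins|^{O(1)}$ running times are immediate, and in particular every order above is bounded by a function of $m$, so Theorem~\ref{thmframework} applies to all the listed rules.
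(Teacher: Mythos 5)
Your proposal is correct, but it takes a noticeably different (and more self-contained) route than the paper. The paper proves the lemma mostly by citation: for positional scoring rules, STV, Maximin, Ranked pairs and Copeland$^{\alpha}$ it simply points to the constructions of Xia and Conitzer, and for Baldwin's and Nanson's it invokes their composition theorem stating that a multi-round runoff rule whose per-round elimination rule is a generalized scoring rule (with \fpt-computable $f$, $g$) is itself such a rule, so it only exhibits the one-round Borda-elimination rule; only Bucklin gets an explicit $k=m^2$ construction. You instead unroll everything: explicit pairwise-coordinate constructions for the non-iterative rules, and for STV/Baldwin's/Nanson's a direct construction with coordinates indexed by pairs (surviving set $S$, candidate in $S$), order $m2^{m-1}$ (plus the auxiliary coordinates for Nanson's), together with an induction showing the whole elimination sequence is determined by the $\sim_K$-type --- which is essentially what the cited composition proof does internally. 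What your route buys is explicit order bounds and, importantly, genuine compatibility of $g$ with $K$: your always-zero coordinate with $\pm 1$ per-vote values (Bucklin) and your scaled-difference coordinates (Nanson's) turn the threshold tests ``more than $n/2$'' and ``below the average'' into pairwise coordinate comparisons, whereas the paper's own specifications have $g$ compare a coordinate against $n/2$ or against the average of the coordinates, which is not literally a function of the order type (e.g.\ $(2,1,0)$ and $(4,1,0)$ have the same type but different below-average patterns); the paper's citation-based approach is shorter and defers these details to Xia--Conitzer. Two small points you should still pin down: in Nanson's, specify what happens when no surviving candidate is strictly below average (all Borda scores within $S$ equal), so the simulation terminates; and your ``iterate STV until $|S|=1$'' shortcut implicitly assumes the fixed tie-breaking is applied consistently with the majority-stopping formulation, which is fine under the paper's conventions but worth a sentence.
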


\begin{proof}
We refer to \cite{DBLP:conf/sigecom/XiaC08a} for all the positional scoring rules, STV, Maximin, Ranked pairs and Copeland$^{\alpha}$ (\cite{DBLP:conf/sigecom/XiaC08a} described only for $\alpha=0.5$. However, with slight modification, the arguments work for all the $0\leq \alpha\leq 1$). In the following, we prove for the Baldwin's, Nanson's and Bucklin, by describing in detail the specifications of the respective generalized scoring rules. Hereby, let $\mathcal{C}$ denote the set of candidates.
\medskip


\medskip

{\bf{Bucklin.}} The Bucklin score of a candidate $c$ is the smallest
number $x$ such that more than half of the votes rank $c$ among
the top $x$ candidates. The winner is the candidate that has
the smallest Bucklin score. The respective generalized scoring rule is as follows.

$k_{Bucklin}=m^2$; the components are indexed by pairs $(c,i)$ where $c$ is a candidate and $i$ is a positive integer with $1\leq i\leq m$.

The score vector of a vote $\succ$ is calculated in the following way.

\[f_{Bucklin}(\succ)_{c,i}=
\begin{cases}
 1 & ~\text{if}~ c~\text{is ranked among the top}~i~\text{positions in}~\succ\\

 0 & ~otherwise~
\end{cases}\]

The decision function $g_{Bucklin}$ works as follows. First, sum up all the score vectors of the votes. Let $\vec{a}$ be the total score vector. It is clear that the Bucklin score of a candidate $c$ is the minimum value $i$ for which $\vec{a}[c,i]> n/2$, where $n$ is the number of votes. Then, the winner is the one with the minimum Bucklin score. 

%
%
%
\medskip

{\bf{Nanson's and Baldwin's.}} These two voting rules are multiround runoff rules, meaning that the winner is selected via rounds in each some candidates are removed from the election. Specifically, in the Nanson's voting, all the candidates with Borda score no greater than the average Borda score are
eliminated in each round. In the next round, the Borda scores of the remaining candidates are recomputed, as
if the eliminated candidates were not in the voting. This is repeated until there is a
final candidate left. The Baldwin's is similar to the Nanson's with difference that in each round the eliminated candidate is the one with least Borda score.

Xia and Conitzer \cite{DBLP:conf/sigecom/XiaC08a} proved that for any voting rule with finitely many runoff rounds, if in each step the rule used to rule out the eliminated candidates is also a generalized scoring rule, then the multiround runoff rule is a generalized scoring rule. Moreover, their (constructive) proof implies that if in each step the generalized scoring rule has {\fpt}-time computable functions $f'$ and $g'$, with respect to the number of candidates, the respective generalized scoring rule of the multiruound voting rule (with polynomially many rounds) also has a {\fpt}-time computable functions $f$ and $g$. We refer to Appendix 1 in \cite{DBLP:conf/sigecom/XiaC08a} for checking further details. Due to this fact, it is sufficient to show that in each round the procedure of selecting the eliminated candidates is a generalized scoring rule with bounded order,  {\fpt}-time computable decision function $g$ and {\fpt}-time computable scoring function $f$.

We consider first for the Nanson's voting. Let $(c_1,c_2,...,c_m)$ be a arbitrary fixed order of the candidates.

$k = m$.

$f(V) = (s(V, c_1), . . . , s(V, c_m))$, where $s(V,c_i)$ is the Borda score of $c_i$ from the vote $V$.

The decision function $g$ selects the winner(s) as follows. Let $avg=\frac{1}{m}\cdot \sum_{i=1}^m f(P)[i]$. The winners (the candidates which are eliminated) are the candidates $c_i$ with $f(P)[i]< avg$.

The Baldwin's voting is similar with the difference that the candidates $c_i$ with minimum ${f(P)[i]}$ are eliminated.
\end{proof}

Based on Theorem \ref{thmframework} and Lemma \ref{commonrulesfpt1}, we have the following corollary.

\begin{corollary}
All the manipulation, bribery and the 22 standard control problems for the following voting rules are {\fpt} with respect to the number of candidates: all the positional scoring rules, Copeland$^{\alpha}$, Maximin, STV, Baldwin's, Nanson's, Ranked pairs and Bucklin.
\end{corollary}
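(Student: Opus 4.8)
The plan is to obtain the corollary as an immediate consequence of Theorem~\ref{thmframework} together with Lemmas~\ref{commonrulesfall} and~\ref{commonrulesfpt1}. Theorem~\ref{thmframework} reduces each of the strategic problems under consideration to polynomially many ILP instances, each with a number of variables bounded by a function of $m$, provided the voting rule is a generalized scoring rule $GS(f,g)$ whose order $k$ is bounded by a function of the number of candidates and whose functions $f$ and $g$ are computable in \fpt-time with respect to the number of candidates. Hence it suffices to verify, for each rule on the list (the positional scoring rules, Copeland$^{\alpha}$, Maximin, STV, Baldwin's, Nanson's, Ranked pairs, and Bucklin), that it satisfies these three requirements.

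First I would invoke Lemma~\ref{commonrulesfall} to place every one of these rules inside the class of generalized scoring rules, which settles the membership question. Then I would invoke Lemma~\ref{commonrulesfpt1}, which supplies exactly the two remaining quantitative ingredients: for the corresponding generalized scoring rule of each listed rule, the order $k$ is bounded by a function of $m$ (for instance $k=m$ for the positional rules and for the single-round sub-rule of Nanson's and Baldwin's, and $k=m^2$ for Bucklin), and both the scoring function $f$ and the decision function $g$ are \fpt-computable with respect to $m$. With all three hypotheses of Theorem~\ref{thmframework} discharged for every rule on the list, the theorem directly yields that manipulation, bribery, and all $22$ standard control problems are \fpt with respect to the number of candidates for each of these rules.

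I expect that the only point requiring genuine care lies inside Lemma~\ref{commonrulesfpt1} rather than in the present deduction, namely checking that the multiround rules STV, Nanson's, and Baldwin's still meet the order bound and the \fpt-computability conditions once their rounds are unrolled. Here I would lean on the constructive argument of Xia and Conitzer~\cite{DBLP:conf/sigecom/XiaC08a}: the per-round procedure (selecting the eliminated candidates) is itself a generalized scoring rule of order $m$ with \fpt-computable $f'$ and $g'$, and composing it with itself a polynomial number of rounds preserves both an order bounded by a function of $m$ and the \fpt-computability of the resulting $f$ and $g$. Since Lemma~\ref{commonrulesfpt1} has already established this, the corollary follows with no additional work.
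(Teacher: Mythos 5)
Your proposal is correct and follows exactly the paper's route: the corollary is obtained immediately by combining Theorem~\ref{thmframework} with Lemma~\ref{commonrulesfall} (membership in the class of generalized scoring rules) and Lemma~\ref{commonrulesfpt1} (bounded order and \fpt-computable $f$ and $g$), with the multiround subtleties already handled inside that lemma. No differences worth noting.
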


\section{Discussion}
The class of generalized scoring rules was first introduced by Xia and Conitzer \cite{DBLP:conf/sigecom/XiaC08a} to investigate the frequency of coalitional manipulability. In this scenario, the focus is how the probability of a random profile being manipulable changes as the number of manipulators increases from 1 to infinite. The class of generalized scoring rules was also used in investigating the margin of victory in voting systems~\cite{DBLP:conf/sigecom/Xia12}. Moreover, Xia and Conitzer \cite{DBLP:conf/ijcai/XiaC09} characterized the class of generalized scoring rules as the class of voting rules that are anonymous and finitely locally consistent. A highly related class of voting rules is the class of hyperplane rules introduced by Mossel et al. \cite{procaccia13jair}. Mathematically,
the generalized scoring rules are equivalent to the hyperplane rules \cite{procaccia13jair}.

In this paper, we extend the application of generalized scoring rules by exploring the parameterized complexity of strategic voting problems. In particular, we show that from the viewpoint of parameterized complexity, the manipulation, bribery and control problems which are {\nph} in many voting systems turned out to be fixed-parameter tractable ({\fpt}), with respect to the number of candidates. The key point of our {\fpt} algorithms is the compatibility of the decision function $g$ in the generalized scoring rules, which enables us to enumerate all the desirable total score vectors in {\fpt} time.

To date, many strategic problems have been proved {\nph}. A challenging task would be to explore the connections between these {\nph}ness via the notion of the generalized scoring rules. For this purpose, a deeper exploitation of the functions $f$ and $g$ is needed. Besides, there are also many other
problems in computational social choice that pertain to conducting strategic
voting, some of which were introduced quite recently \cite{DBLP:conf/atal/FaliszewskiHH13,DBLP:conf/atal/PerekFPR13}. Exploring the parameterized complexity of these newly proposed voting problems via the framework of generalized scoring rules is also an interesting topic.

{\bf{Note.}} Similar results of this paper were independently announced by Xia~\cite{lirongxiaaamas2014}. However, there are several differences. First, our results apply to all the 22 standard control problems, while the results in~\cite{lirongxiaaamas2014} does not include the control by partition votes. Second, Xia studied the winner determination problem which is not discussed in this paper.

\bibliography{D:/Dropbox/sociachoiceref}

\begin{thebibliography}{10}

\bibitem{DBLP:conf/birthday/BetzlerBCN12}
N.~Betzler, R.~Bredereck, J.~Chen, and R.~Niedermeier, `Studies in
  computational aspects of voting - a parameterized complexity perspective', in
  {\em The Multivariate Algorithmic Revolution and Beyond}, pp. 318--363,
  (2012).

\bibitem{DBLP:conf/ijcai/BetzlerHN09}
N.~Betzler, S.~Hemmann, and R.~Niedermeier, `A multivariate complexity analysis
  of determining possible winners given incomplete votes', in {\em IJCAI}, pp.
  53--58, (2009).

\bibitem{DBLP:conf/sofsem/ChevaleyreELM07}
Y.~Chevaleyre, U.~Endriss, J.~Lang, and N.~Maudet, `A short introduction to
  computational social choice', in {\em SOFSEM (1)}, pp. 51--69, (2007).

\bibitem{DBLP:conf/aaai/ConitzerS06}
V.~Conitzer and T.~Sandholm, `Nonexistence of voting rules that are usually
  hard to manipulate', in {\em AAAI}, pp. 627--634, (2006).

\bibitem{DBLP:journals/algorithmica/DornS12}
B.~Dorn and I.~Schlotter, `Multivariate complexity analysis of swap bribery',
  {\em Algorithmica}, {\bf 64}(1),  126--151, (2012).

\bibitem{fellows99}
R.~G. Downey and M.~R. Fellows, {\em Parameterized Complexity}, Springer, 1999.

\bibitem{DBLP:conf/atal/ElkindFS10}
E.~Elkind, P.~Faliszewski, and A.~M. Slinko, `On the role of distances in
  defining voting rules', in {\em AAMAS}, pp. 375--382, (2010).

\bibitem{DBLP:journals/mlq/ErdelyiNR09}
G.~Erd{\'e}lyi, M.~Nowak, and J.~Rothe, `Sincere-strategy preference-based
  approval voting fully resists constructive control and broadly resists
  destructive control', {\em Math. Log. Q.}, {\bf 55}(4),  425--443, (2009).

\bibitem{DBLP:journals/jair/FaliszewskiHH11}
P.~Faliszewski, E.~Hemaspaandra, and L.~A. Hemaspaandra, `Multimode control
  attacks on elections', {\em J. Artif. Intell. Res. (JAIR)}, {\bf 40},
  305--351, (2011).

\bibitem{DBLP:conf/atal/FaliszewskiHH13}
P.~Faliszewski, E.~Hemaspaandra, and L.~A. Hemaspaandra, `Weighted electoral
  control', in {\em AAMAS}, pp. 367--374, (2013).

\bibitem{DBLP:journals/jair/FaliszewskiHHR09}
P.~Faliszewski, E.~Hemaspaandra, L.~A. Hemaspaandra, and J.~Rothe, `Llull and
  copeland voting computationally resist bribery and constructive control',
  {\em J. Artif. Intell. Res. (JAIR)}, {\bf 35},  275--341, (2009).

\bibitem{DBLP:journals/aim/FaliszewskiP10}
P.~Faliszewski and A.~D. Procaccia, `{AI}'s war on manipulation: Are we
  winning?', {\em AI Magazine}, {\bf 31}(4),  53--64, (2010).

\bibitem{DBLP:conf/atal/GaspersKNW13}
S.~Gaspers, T.~Kalinowski, N.~Narodytska, and T.~Walsh, `Coalitional
  manipulation for {S}chulze's rule', in {\em AAMAS}, pp. 431--438, (2013).

\bibitem{Gibbard73}
A.~Gibbard, `Manipulation of voting schemes: A general result', {\em
  Econometrica}, {\bf 41}(4),  587--601, (1973).

\bibitem{AAAI14a}
J.~Goldsmith, J.~Lang, N.~Mattei, and P~Perny, `Voting with rank dependent
  scoring rules', in {\em AAAI}, (2014).
\newblock to appear.

\bibitem{DBLP:conf/atal/HemaspaandraLM13}
L.~A. Hemaspaandra, R.~Lavaee, and C.~Menton, `Schulze and ranked-pairs voting
  are fixed-parameter tractable to bribe, manipulate, and control', in {\em
  AAMAS}, pp. 1345--1346, (2013).

\bibitem{BARTHOLDI89}
{J. J. Bartholdi, III}, C.A. Tovey, and M.~Trick, `The computational difficulty
  of manipulating an election', {\em Social Choice and Welfare}, {\bf 6}(3),
  227--241, (1989).

\bibitem{lenstra83}
H.~W. Lenstra, `Integer programming with a fixed number of variables', {\em
  Mathematics of Operations Research}, {\bf 8}(4),  538--548, (1983).

\bibitem{Lindner08}
C.~Lindner and J.~Rothe, `Fixed-parameter tractability and parameterized
  complexity, applied to problems from computational social choice', in {\em
  Mathematical Programming Glossary}, (2008).

\bibitem{DBLP:conf/aldt/MatteiW13}
N.~Mattei and T.~Walsh, `Preflib: A library for preferences http:
  //www.preflib.org', in {\em ADT}, pp. 259--270, (2013).

\bibitem{procaccia13jair}
E.~Mossel, A.~D. Procaccia, and M.~Z. Racz, `A smooth transition from
  powerlessness to absolute power', {\em J. Artif. Intell. Res. (JAIR)}, {\bf
  48},  923--951, (2013).

\bibitem{rolf06}
R.~Niedermeier, {\em Invitation to Fixed-parameter Algorithms}, Oxford
  University Press Inc, 2006.

\bibitem{DBLP:conf/atal/PerekFPR13}
T.~Perek, P.~Faliszewski, M.~Silvia Pini, and F.~Rossi, `The complexity of
  losing voters', in {\em AAMAS}, pp. 407--414, (2013).

\bibitem{DBLP:journals/cj/PittKSA06}
J.~Pitt, L.~Kamara, M.~J. Sergot, and A.~Artikis, `Voting in multi-agent
  systems', {\em Comput. J.}, {\bf 49}(2),  156--170, (2006).

\bibitem{DBLP:conf/hci/Popescu13b}
G.~Popescu, `Group recommender systems as a voting problem', in {\em HCI (26)},
  pp. 412--421, (2013).

\bibitem{DBLP:journals/jair/ProcacciaR07}
A.~D. Procaccia and J.~S. Rosenschein, `Junta distributions and the
  average-case complexity of manipulating elections', {\em J. Artif. Intell.
  Res. (JAIR)}, {\bf 28},  157--181, (2007).

\bibitem{Satterthwaite75}
M.~Satterthwaite, `Strategy-proofness and {A}rrow's conditions: Existence and
  correspondence theorems for voting procedures and social welfare functions',
  {\em Journal of Economic Theory}, {\bf 10}(2),  187--216, (1975).

\bibitem{DBLP:journals/amai/Walsh11}
T.~Walsh, `Is computational complexity a barrier to manipulation?', {\em Ann.
  Math. Artif. Intell.}, {\bf 62}(1-2),  7--26, (2011).

\bibitem{lirongxiaaamas2014}
L.~Xia, `Fixed-parameter tractability of integer generalized scoring rules
  (extended abstract)'.
\newblock To appear in the proceeding of AAMAS 2014.

\bibitem{DBLP:conf/sigecom/Xia12}
L.~Xia, `Computing the margin of victory for various voting rules', in {\em
  EC}, pp. 982--999, (2012).

\bibitem{DBLP:conf/atal/Xia13}
L.~Xia, `Designing social choice mechanisms using machine learning', in {\em
  AAMAS}, pp. 471--474, (2013).

\bibitem{DBLP:conf/sigecom/XiaC08a}
L.~Xia and V.~Conitzer, `Generalized scoring rules and the frequency of
  coalitional manipulability', in {\em ACM Conference on Electronic Commerce},
  pp. 109--118, (2008).

\bibitem{DBLP:conf/ijcai/XiaC09}
L.~Xia and V.~Conitzer, `Finite local consistency characterizes generalized
  scoring rules', in {\em IJCAI}, pp. 336--341, (2009).

\end{thebibliography}
\bibliographystyle{ECAI2014}
\end{document}